\newcommand{\R}{\mathbb{R}}
\newcommand{\C}{\mathbb{C}}
\newcommand{\X}{\mathcal{X}}
\newcommand{\Z}{\mathcal{Z}}
\newcommand{\Sb}{\mathcal{S}}
\newcommand{\Tb}{\mathcal{T}}
\newcommand{\B}{\mathcal{B}}
\newcommand{\rank}{\mathrm{rank}}
\newcommand{\spn}{\mathrm{Span}}
\DeclareMathOperator{\img}{\mathrm{img}}
\newcommand{\diag}{\mathrm{diag}}
\newcommand{\srad}{\boldsymbol{\rho}}
\newcommand{\id}{\mathrm{I}}
\newcommand{\s}{N}
\newcommand{\CL}{\textsc{cl}}
\newcommand{\dfn}{\doteq}
\newcommand{\Ind}{\underline{\s}}
\newcommand{\cea}{CEA\xspace}
\newcommand{\blk}{\mathrm{blk}}
\newcommand{\QEDopen}{\hfill$\Box$}
\renewcommand{\dim}{\mathrm{d}}
\newtheorem{Lemma}{Lemma}
\newtheorem{Theorem}{Theorem}
\newtheorem{Definition}{Definition}
\newtheorem{Corollary}{Corollary}
\begin{document}

\title{Sufficient conditions for the genericity of feedback
  stabilisability of switching systems via Lie-algebraic solvability}
\author{H. Haimovich$^{\dag }$ \and J. H. Braslavsky$^{\ddag }$}
\affiliation{%
  \dag CONICET and Laboratorio de Sistemas Din\'amicos y
  Procesamiento de Informaci\'on,\\
  Depto. de Control, Esc. de Ing. Electr\'onica, FCEIA,
  Universidad Nacional de Rosario,\\
  Riobamba 245bis, 2000 Rosario, Argentina. {\ h.haimovich@gmail.com} \\
  \ddag Australian Commonwealth Scientific and Industrial Research
  Organisation (CSIRO),\\
  Division of Energy Technology, PO Box 330, Newcastle NSW 2300,
  Australia. {\ Julio.Braslavsky@csiro.au} }

\maketitle

\abstract %
This paper addresses the stabilisation of discrete-time switching
linear systems (DTSSs) with control inputs under arbitrary switching,
based on the existence of a common quadratic Lyapunov function (CQLF).
The authors have begun a line of work dealing with control design
based on the Lie-algebraic solvability property. The present paper
expands on earlier work by deriving sufficient conditions under which
the closed-loop system can be caused to satisfy the Lie-algebraic
solvability property generically, i.e. for almost every set of system
parameters, furthermore admitting straightforward and efficient
numerical implementation. %
\endabstract

\keywords%
  Switching systems, Lie algebras, Common eigenvector assignment,
  Transverse subspaces, Genericity.
\endkeywords

\section{INTRODUCTION}

This paper considers control design for feedback stabilisation of
DTSSs under arbitrary switching regimes. Several results for the
analysis of stability of switching systems under arbitrary switching
exist \cite{showir_siamrev07,lin09:_stabil}, although most of these
deal only with autonomous switching systems, i.e. switching systems
without continuous control inputs.

The study of control design methods to achieve closed-loop stability
under arbitrary switching for systems with control inputs, in
contrast, has been relatively scarce. Amidst the existing work, a
computationally appealing approach consists in adapting LMI-based
numerical stability tests to the synthesis of feedback controls that
will guarantee the existence of a CQLF for the closed-loop switching
system (e.g. \cite{daafouz02:_stabil,ji04:_LMI}). Such
LMI-based methods for control design, however, provide little
information on the structure of the designed closed-loop system, and
thus in general offer limited support for analysis.




A structurally-based control design method has been developed by the
authors \cite{haimovich11:_lie,haibra_cdc10,haibra_aucc11} that
involves the Lie-algebraic solvability property. A well-known
sufficient condition for stability \cite{liberzon03:_switc} states
that an autonomous switching linear system admits a CQLF (and is hence
stable under arbitrary switching) if every subsystem is stable and the
Lie algebra generated by the subsystem matrices is
solvable. Solvability of a matrix Lie algebra is equivalent to the
existence of a single similarity transformation that transforms each
matrix into upper triangular form. The results of
\cite{haimovich11:_lie,haibra_cdc10,haibra_aucc11} thus ``activate''
the aforementioned stability analysis result into a control design
technique.



A central contribution in \cite{haimovich11:_lie} is an iterative
design algorithm that searches for a set of stabilising feedback
matrices that attain the target simultaneously triangularisable
closed-loop structure via the application of a common eigenvector
assignment (CEA) procedure and state dimension reduction at each
iteration. 
The main theoretical result in \cite{haimovich11:_lie} establishes
that the proposed algorithm will be successful until the state
dimension is reduced to 1 \emph{if and only if} feedback matrices
exist so that the corresponding closed-loop subsystem matrices are
stable and simultaneously triangularisable, i.e. if and only if
feedback matrices exist so that the closed-loop system satisfies the
aforementioned Lie-algebraic stability condition. Also in
\cite{haimovich11:_lie}, a numerical implementation for the proposed
iterative design algorithm and CEA procedure are provided. A key
structural condition also is provided which, when satisfied, guarantees
a directly computable solution for the CEA procedure. If this
structural condition is not satisfied, then the required quantities
are sought by means of an optimisation problem.

The aforementioned Lie-algebraic stability condition is (a)
restrictive and (b) non-robust, in the sense that (a) it is satisfied
for a very limited number of autonomous switching systems and (b) even
if it is satisfied for a given system, it is almost surely not
satisfied by systems with parameters arbitrarily close to the given
one. The work in \cite{haibra_cdc10} then provides a robust result by relaxing, for single
input systems, 
the simultaneous triangularisation requirement to \emph{approximate}
(in a specific sense) simultaneous triangularisation. The main
theoretical contribution in \cite{haibra_cdc10} establishes that if a
system satisfying the aforementioned Lie-algebraic condition exists in
a suitably small neighbourhood of the given system data, then the
proposed algorithm is guaranteed to find feedback matrices so that the
corresponding closed-loop DTSS admits a CQLF even if the Lie-algebraic
condition is not met by the given system data.

Even if the aforementioned Lie-algebraic condition is restrictive for
autonomous switching systems, the existence of feedback controls
causing the corresponding closed-loop system to satisfy such
Lie-algebraic condition need not be such a restrictive condition. The
restrictiveness of this condition is related to the key structural
condition provided in \cite{haimovich11:_lie}: if such structural
condition is satisfied at every iteration of the algorithm, then the
problem may be not restrictive at all for systems with the given
dimensions. In this regard, the main result in \cite{haibra_aucc11} is
the identification of the situation that prevents the structural
condition from holding at every iteration of the algorithm.

In the present paper, we build upon the results of
\cite{haibra_aucc11} by providing sufficient conditions for the
structural condition to hold at every iteration of the algorithm
\emph{for almost every set of system parameters with the given
  dimensions}. We thus provide sufficient conditions for the
\emph{genericity} of the property of existence of feedback matrices so
that the closed-loop subsystem matrices are stable and generate a
solvable Lie algebra.

\textbf{Notation}. The index set $\{1,2,\dots,\s\}$ is denoted
$\Ind$. The kernel (null space) of a matrix or linear map $A$ is
denoted $\ker A$, its image (range), $\img A$, and its spectral
radius, $\srad(A)$. For $x\in\C^{n\times m}$, its transpose is denoted
$x'$, its conjugate transpose $x^*$ and its Moore-Penrose generalised
inverse $x^\dagger$. If $\Sb,\Tb$ are vector spaces, then
$\Sb\subset\Tb$ means that $\Sb$ is a subspace of $\Tb$ and
$\dim(\Sb)$ denotes the dimension of $\Sb$.

\section{PROBLEM FORMULATION}
\label{sec:problem-formulation}

Consider the DTSS
\begin{eqnarray}
  \label{eq:dtss}
  x_{k+1}&=& A_{i(k)}x_k + B_{i(k)}u_k^{i(k)},
\end{eqnarray}
where $x_k \in \R^n$ and $u_k^{i} \in \R^{m_i}$ for all $k$, $i(k)$
takes values in $\Ind$ for all $k$, the matrices $A_i \in \R^{n\times
  n}$ and $B_i \in \R^{n\times m_i}$ are known for all $i\in\Ind$,
$B_i$ have full column rank, and $(A_i,B_i)$ is controllable for all
$i\in\Ind$. We are interested in state-feedback control design of the
form
\begin{equation}
  \label{eq:47}
  u_k^{i(k)} = K_{i(k)}x_k,
\end{equation}
so that the resulting closed-loop system
\begin{align}
  \label{eq:26}
  x_{k+1} &=  A_{i(k)}^\CL x_k,\quad\text{where}\\
  \label{eq:49}
  A_{i}^\CL &= A_{i} + B_{i} K_{i},\quad\text{for }i\in\Ind,
\end{align}
admit a CQLF and hence be stable under arbitrary switching. Note that
at every time instant $k$, the control law \eqref{eq:47} requires
knowledge of the ``active'' subsystem given by $i(k)$.

As is well-known, ensuring that $\srad(A_i^\CL)<1$ for $i\in\Ind$
is necessary but not sufficient to ensure the stability of the DTSS
(\ref{eq:26}) for arbitrary switching. A sufficient condition is
given by the following result, which is a minor modification of
\cite[Theorem 6.18]{theys_phd05}.

\begin{Lemma}[Lie-algebraic-solvability stability condition]
  \label{lem:stab}
  If $\srad(A_i^\CL)<1$ for $i\in\Ind$, and the Lie algebra
  generated by $\{A_i^\CL: i\in\Ind\}$ is solvable, then
  (\ref{eq:26}) admits a common quadratic Lyapunov function and hence
  is exponentially stable.\hfill\QEDopen
\end{Lemma}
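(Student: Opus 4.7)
The plan is to reduce the problem to simultaneous triangularisation via Lie's theorem and then construct a CQLF by an appropriate diagonal weighting, paralleling the continuous-time argument of \cite{liberzon03:_switc} adapted to the discrete-time setting.

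First I would view the closed-loop matrices $\{A_i^\CL : i \in \Ind\}$ as elements of $\C^{n\times n}$ and invoke Lie's theorem: a solvable complex matrix Lie algebra admits a common flag of invariant subspaces, hence there exists an invertible $T \in \C^{n\times n}$ such that $U_i \dfn T^{-1} A_i^\CL T$ is upper triangular for every $i \in \Ind$. The diagonal entries of each $U_i$ are the eigenvalues of $A_i^\CL$ and therefore have modulus strictly less than one by the spectral-radius hypothesis $\srad(A_i^\CL)<1$.

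Next I would build a common Lyapunov matrix of the form $P \dfn T^{-*} D_\epsilon T^{-1}$ with $D_\epsilon \dfn \diag(\epsilon,\epsilon^2,\dots,\epsilon^n)$ and $\epsilon>0$ to be chosen small. The key observation is that the $(j,k)$ entry of $D_\epsilon^{1/2} U_i D_\epsilon^{-1/2}$ equals $\epsilon^{(k-j)/2}(U_i)_{jk}$, which for $k>j$ vanishes as $\epsilon\downarrow 0$, so this similarity tends to the diagonal part of $U_i$ whose operator norm coincides with $\srad(A_i^\CL)<1$. Since $\Ind$ is finite, a single sufficiently small $\epsilon$ yields $\|D_\epsilon^{1/2} U_i D_\epsilon^{-1/2}\|<1$ for every $i$, which is equivalent to $U_i^* D_\epsilon U_i \prec D_\epsilon$ and hence, after conjugation by $T^{-1}$, to $(A_i^\CL)^* P A_i^\CL \prec P$ uniformly in $i$.

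Finally, because the matrices $A_i^\CL$ are real whereas $T$ and $P$ are in general complex, I would pass to $P_R \dfn \re(P)$. Hermitian positive-definiteness of $P$ gives $x' P_R x = x^* P x > 0$ for every nonzero real $x$, so $P_R$ is real symmetric positive definite; applying the same identity to $A_i^\CL x$ yields $(A_i^\CL)' P_R A_i^\CL - P_R \prec 0$, so that $V(x)\dfn x' P_R x$ is a CQLF and exponential stability under arbitrary switching follows by the standard Lyapunov argument. The main technical step is the scaling construction, whose delicacy lies in verifying that a single $\epsilon$ suffices for every $i\in\Ind$ simultaneously; finiteness of $\Ind$ together with the strict inequality $\srad(A_i^\CL)<1$ is exactly what makes this uniform choice possible.
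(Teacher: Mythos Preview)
Your argument is essentially the standard proof and is correct in outline; note, however, that the paper does not actually prove Lemma~\ref{lem:stab} --- it is stated as a known result, a minor modification of \cite[Theorem~6.18]{theys_phd05}, with the box immediately after the statement. So there is no ``paper's own proof'' to compare against beyond that citation, and your sketch is exactly the kind of argument that reference encapsulates.

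One computational slip worth fixing: with $D_\epsilon = \diag(\epsilon,\epsilon^2,\dots,\epsilon^n)$ the $(j,k)$ entry of $D_\epsilon^{1/2} U_i D_\epsilon^{-1/2}$ is $\epsilon^{(j-k)/2}(U_i)_{jk}$, not $\epsilon^{(k-j)/2}(U_i)_{jk}$. For an upper-triangular $U_i$ the strictly upper entries have $j<k$, so with your $D_\epsilon$ they blow up as $\epsilon\downarrow 0$ rather than vanish. The fix is trivial --- reverse the weighting (e.g.\ take $D_\epsilon=\diag(\epsilon^{-1},\dots,\epsilon^{-n})$, or simply send $\epsilon\to\infty$) --- and the rest of your construction, including the uniform choice of $\epsilon$ over the finite index set and the passage to $P_R=\re(P)$ via the skew-symmetry of $\im(P)$, goes through unchanged.
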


In this paper, we specifically consider stabilising state feedback
design for the DTSS \eqref{eq:dtss} based on the
Lie-algebraic-solvability condition of Lemma~\ref{lem:stab}, and thus
focus on the DTSS class defined next.

\begin{Definition}[SLASF]
  \label{def:SLASF}
  A set $\Z = \{(A_i \in \R^{n\times n}, B_i \in \R^{n\times m_i}) :
  i\in\Ind\}$ is said to be \emph{SLASF (Solvable Lie Algebra with
    Stability by Feedback)} if there exist $K_i \in \R^{m_i\times n}$
  such that $A_i^\CL$ as in (\ref{eq:49}) generate a solvable Lie
  algebra and satisfy $\srad(A_i^\CL)<1$. 
  \hfill\QEDopen 
\end{Definition}

In matrix terms, the fact that the Lie algebra generated by the
matrices $A_i^\CL$ is solvable is equivalent to the existence of an
invertible matrix $T\in\C^{n\times n}$ such that $T^{-1} A_i^\CL T$ is
upper triangular for $i\in\Ind$. That is, each matrix $A_i^\CL$ is
similar to an upper triangular matrix under \emph{a common} similarity
transformation $T$.  Note that even if the matrices $A_i^\CL$ have
real entries, those of $T$ may be complex \cite{erdwil_book06}.

\section{PREVIOUS RESULTS}
\label{sec:contr-design-algor}

Control design that causes the closed-loop system to be stable by
satisfying the conditions of Lemma~\ref{lem:stab} can be performed
iteratively by seeking feedback matrices that assign a common
eigenvector with stable corresponding eigenvalues, and reducing the
state-space dimension by 1 at every iteration
\cite{haibra_cdc09,haimovich11:_lie}. This methodology is given in
pseudocode below as Algorithm~\ref{alg:main}. Algorithm~\ref{alg:main}
seeks feedback matrices $K_i$ so that the closed-loop matrices
$A_i^\CL$ given by (\ref{eq:49}) are stable and simultaneously
triangularisable.



\begin{algorithm}[!ht]
  \label{alg:main}
  \SetTitleSty{texrm}{\normalsize}
  \SetKwComment{tcc}{\% }{}
  \KwData{$A_i \in \R^{n\times n}$, $B_i \in \R^{n\times m_i}$ for
    $i\in\Ind$}
  \KwOut{$K_i$ for $i\in\Ind$}
  \SetKwBlock{Init}{begin}{end}
  \Init(Initialisation){
    $A_i^1 \dfn A_i$, $B_i^1 \dfn B_i$, $K_i^0 \dfn 0$, $U_1 \dfn \id$, $\ell \leftarrow 0$ \;
  }
  \Repeat{$\ell = n$}%
  {\vspace{-1em}
    \begin{align}
      \label{eq:5}
      \ell &\leftarrow \ell+1,\quad n_\ell \leftarrow n-\ell+1 \;\\
      \label{eq:15}
      [v_1^\ell ,\{&F_i^\ell\}_{i=1}^{\s}] 
      \leftarrow  \text{\cea}(\{A_i^\ell\}_{i=1}^{\s},
      \{B_i^\ell\}_{i=1}^{\s})\\
      \label{eq:33}
      A_i^{\ell,\CL} &\dfn A_i^\ell + B_i^\ell F_i^\ell,\\
      \label{eq:9}
      K_i^\ell &\leftarrow K_i^{\ell-1} + F_i^\ell 
      \biggl(\prod_{r=1}^{\ell} U_r^* \biggr)
    \end{align}
    \If{$\ell<n$}{%
      Construct a unitary matrix:
      \begin{equation}
        \label{eq:62}
        \big[v_1^\ell | v_2^\ell | \cdots 
        | v_{n_\ell}^\ell\big] \in \C^{n_\ell \times n_\ell}.
      \end{equation}
      Assign\vspace{-2\baselineskip}
      \begin{align}
        \label{eq:99}
        U_{\ell+1} &\leftarrow [v_2^{\ell}|\cdots |v_{n_\ell}^\ell], \\
        \label{eq:100}
        A_i^{\ell+1} &\leftarrow U_{\ell+1}^{*} A_i^{\ell,\CL} U_{\ell+1},\\
        \label{eq:6} 
        B_i^{\ell+1} &\leftarrow U_{\ell+1}^{*} B_i^\ell,
      \end{align}
    }
  }
  $K_i \leftarrow K_i^n$ \;
  \caption{Iterative triangularisation}
\end{algorithm}

\subsection{The Algorithm}
\label{sec:algorithm}

Algorithm~\ref{alg:main} begins by setting internal matrices equal to
the subsystem matrices of the DTSS to be stabilised ($A_i^1 = A_i$ and
$B_i^1 = B_i$ at the Initialisation step). At every iteration [$\ell$
indicates iteration number, see (\ref{eq:5})], the algorithm executes
Procedure~\cea [see~(\ref{eq:15})] on its internal system matrices
(the latter matrices are $A_i^\ell$ and $B_i^\ell$). Procedure~\cea
aims to compute a vector, $v_1^\ell$, and corresponding feedback
matrices, $F_i^\ell$, so that $v_1^\ell$ is a feedback-assignable unit
eigenvector common to all internal subsystems, with corresponding
stable eigenvalues. That is, if Procedure~\cea is successful, then
$v_1^\ell$ will satisfy $\|v_1^\ell\| = 1$ and $(A_i^\ell + B_i^\ell
F_i^\ell) v_1^\ell = \lambda_i^\ell v_1^\ell$ for some scalars
$\lambda_i^\ell$ satisfying $|\lambda_i^\ell| < 1$, for all
$i\in\Ind$. Algorithm~\ref{alg:main} then computes internal
closed-loop matrices [$A_i^{\ell,\CL}$ in (\ref{eq:33})], updates
internal feedback matrices [$K_i^\ell$ in (\ref{eq:9})] and then
reduces the internal state dimension by 1. This reduction occurs at
(\ref{eq:62})--(\ref{eq:6}) [$n_\ell$ is the internal state dimension,
see (\ref{eq:5})]. Note that $v_1^\ell$ is the first column of the
unitary matrix (\ref{eq:62}), and considering (\ref{eq:99}) then
$U_{\ell+1}^* U_{\ell+1} = \id$ and $U_{\ell+1}^* v_1^\ell =
0$. Algorithm~\ref{alg:main} iterates until the internal state reaches
dimension 1. If the given system matrices form a SLASF set (recall
Definition~\ref{def:SLASF}), the matrices $K_i$ computed by
Algorithm~\ref{alg:main} will be the required feedback matrices.

If the given system matrices $A_i$, $B_i$, for $i\in\Ind$, form a
SLASF set, then at every iteration of Algorithm~\ref{alg:main} a
stable feedback-assignable common eigenvector $v_1^\ell$ is ensured to
exist for the internal system with matrices $A_i^\ell$, $B_i^\ell$,
for $i\in\Ind$. Conversely, if a feedback-assignable common
eigenvector $v_1^\ell$ exists at every iteration of
Algorithm~\ref{alg:main}, then the given system matrices form a SLASF
set. The latter constitutes the main theoretical result that underpins
our iterative control design algorithm
\cite{haibra_cdc09,haimovich11:_lie}. 

\subsection{The Procedure}
\label{sec:keystruc}

As expressed in the previous paragraph, the existence of a
feedback-assignable common eigenvector with corresponding stable
eigenvalues is central to our development. This section recalls the
structural condition introduced in \cite{haimovich11:_lie} which, when
satisfied, ensures that such a vector exists and allows its
computation in a numerically efficient and straightforward way.

We introduce some notation required to state the aforementioned
structural condition. Define $m_i^\ell \dfn \rank(B_i^\ell) =
\dim(\img B_i^\ell)$, and
factor $B_i^\ell = b_i^\ell r_i^\ell$, where $r_i^\ell : \R^{m_i} \to
\R^{m_i^\ell}$ has full row rank and $b_i^\ell : \R^{m_i^\ell} \to
\R^{n_\ell}$ has full column rank. We adopt the convention that
$b_i^\ell$ is an empty matrix if $m_i^\ell=0$. Note that $\img
B_i^\ell = \img b_i^\ell$. Let $\Lambda^\ell$ be the vector with
components $\lambda_i^\ell$, $i\in\Ind$, i.e.
\begin{equation}
  \label{eq:31}
  \Lambda^\ell \dfn [\lambda_1^\ell, \lambda_2^\ell, \ldots, \lambda_N^\ell]',
\end{equation}
and build the matrix
\begin{align}
  \label{eq:17}
  Q_\ell(\Lambda^\ell) &\dfn [R_\ell(\Lambda^\ell) , -B_\ell]\\
  R_\ell(\Lambda^\ell) &\dfn
  \begin{bmatrix}
    \lambda_1^\ell I - A_1^\ell\\
    \vdots \\
    \lambda_N^\ell I - A_N^\ell
  \end{bmatrix},
  &B_\ell &\dfn \blk\diag\bigl[b_1^\ell,\dots,b_N^\ell\bigr],\notag
\end{align}
where $\blk\diag$ denotes block diagonal concatenation.

\begin{Lemma}[Structural condition \cite{haimovich11:_lie,haibra_aucc11}]
  \label{lem:struct}
  Let
  \begin{equation}
    \label{eq:8}
    p_\ell \dfn n_\ell + \sum_{i=1}^N m_i^\ell - N n_\ell.
  \end{equation}
  Then,
  \begin{enumerate}[(a)]
  \item A vector that can be assigned by feedback as a common
    eigenvector with corresponding eigenvalues $\lambda_i^\ell$ for
    $i\in\Ind$ exists if and only if $\dim(\ker Q_\ell(\Lambda^\ell))
    > 0$
    .\label{item:1}
  \item If $Q_\ell(\Lambda^\ell)w = 0$ with $w\neq 0$ partitioned as
    \begin{gather}
      \label{eq:13}
      w \dfn [v',u_1',\dots,u_N']',\qquad\text{then }v\neq 0,\text{ and}\\
      \label{eq:39}
      (A_i^\ell +B_i^\ell F_i^\ell)v = \lambda_i^\ell v,\quad\text{for }i\in\Ind,
    \end{gather}
    for every $F_i^\ell$ satisfying $r_i^\ell F_i^\ell v = u_i$. For
    each $i\in\Ind$ one such $F_i^\ell$ always exists and is given by
    $F_i^\ell = (r_i^\ell)^\dagger u_i v^\dagger$.\label{item:4}
  \item $\dim(\ker Q_\ell(\Lambda^\ell)) \ge p_\ell$ for every choice
    of $\Lambda^\ell$ as in (\ref{eq:31}). Consequently, if $p_\ell >
    0$, then a feedback-assignable common eigenvector exists for every
    choice of corresponding eigenvalues.\label{item:3}
    \mbox{}\hfill\QEDopen
  \end{enumerate}
\end{Lemma}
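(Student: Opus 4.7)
The three parts all hinge on recognising that the block equation $Q_\ell(\Lambda^\ell)w=0$ is just a compact rewriting of the simultaneous assignment conditions, after factoring out the range structure $B_i^\ell = b_i^\ell r_i^\ell$. I would prove (a) and (b) together by establishing a bijective correspondence between (nonzero) kernel vectors of $Q_\ell(\Lambda^\ell)$ and feedback-assignable common eigenvector data, and then derive (c) as an immediate rank/nullity consequence of the shape of $Q_\ell(\Lambda^\ell)$.

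First I would handle the ``if'' direction of (a) jointly with (b). Suppose $w=[v',u_1',\dots,u_N']'\neq 0$ lies in $\ker Q_\ell(\Lambda^\ell)$. Expanding the block equation gives $(\lambda_i^\ell I - A_i^\ell)v = b_i^\ell u_i$ for every $i\in\Ind$. Here the nonzeroness of $v$ is forced: if $v=0$ then $b_i^\ell u_i=0$ for all $i$, and since $b_i^\ell$ has full column rank this yields $u_i=0$, contradicting $w\neq 0$. Next, given any $F_i^\ell$ with $r_i^\ell F_i^\ell v = u_i$, I would compute $B_i^\ell F_i^\ell v = b_i^\ell r_i^\ell F_i^\ell v = b_i^\ell u_i = (\lambda_i^\ell I - A_i^\ell)v$, which is exactly \eqref{eq:39}. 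For the existence of such an $F_i^\ell$, the candidate $F_i^\ell = (r_i^\ell)^\dagger u_i v^\dagger$ works because $r_i^\ell$ has full row rank (so $r_i^\ell(r_i^\ell)^\dagger = I$) and $v\neq 0$ implies $v^\dagger v = 1$; substituting gives $r_i^\ell F_i^\ell v = u_i$.

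For the ``only if'' direction of (a), I would run the same computation backwards: if feedback matrices $F_i^\ell$ produce a common eigenvector $v\neq 0$ with eigenvalues $\lambda_i^\ell$, then $(\lambda_i^\ell I - A_i^\ell)v = B_i^\ell F_i^\ell v = b_i^\ell (r_i^\ell F_i^\ell v)$, so setting $u_i \dfn r_i^\ell F_i^\ell v$ and $w\dfn[v',u_1',\dots,u_N']'$ produces a nonzero vector in $\ker Q_\ell(\Lambda^\ell)$. This closes (a) and (b).

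Finally, part (c) is a dimension count. The matrix $R_\ell(\Lambda^\ell)$ has $Nn_\ell$ rows and $n_\ell$ columns, while $B_\ell = \blk\diag[b_1^\ell,\dots,b_N^\ell]$ contributes $\sum_{i=1}^N m_i^\ell$ further columns, so $Q_\ell(\Lambda^\ell)\in\C^{Nn_\ell\times(n_\ell+\sum_i m_i^\ell)}$. By the rank--nullity theorem, $\dim(\ker Q_\ell(\Lambda^\ell)) \ge (n_\ell+\sum_i m_i^\ell) - Nn_\ell = p_\ell$, which is the claimed bound; combined with (a), if $p_\ell>0$ then a feedback-assignable common eigenvector exists for every choice of $\Lambda^\ell$. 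I do not anticipate a serious obstacle: the only slightly subtle point is making sure the pseudoinverse formula $F_i^\ell=(r_i^\ell)^\dagger u_i v^\dagger$ really delivers $r_i^\ell F_i^\ell v = u_i$, which relies crucially on the full-row-rank of $r_i^\ell$ and the nonzeroness of $v$ established earlier in (b).
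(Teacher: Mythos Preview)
Your argument is correct and is the natural one: unpacking the block structure of $Q_\ell(\Lambda^\ell)$ to get the per-subsystem relations $(\lambda_i^\ell I - A_i^\ell)v = b_i^\ell u_i$, using full column rank of $b_i^\ell$ to force $v\neq 0$, verifying the pseudoinverse formula via $r_i^\ell(r_i^\ell)^\dagger = I$ and $v^\dagger v = 1$, and then bounding the kernel dimension by rank--nullity (rank at most the number $Nn_\ell$ of rows). Note that the paper does not actually supply a proof of this lemma here; it is quoted from the cited earlier work \cite{haimovich11:_lie,haibra_aucc11} and stated with a closing $\Box$ only, so there is no in-paper argument to compare against---your write-up is exactly the kind of proof one would expect those references to contain.
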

Lemma~\ref{lem:struct} gives a structural condition, namely $p_\ell >
0$, for a feedback-assignable common eigenvector $v$ 
to exist for every choice of corresponding eigenvalues
$\lambda_i^\ell$. This condition is \emph{structural} because the
quantities involved in the computation of $p_\ell$ are only matrix
ranks and dimensions. If the structural condition $p_\ell > 0$ is
satisfied, a feedback-assignable common eigenvector $v_1^\ell$, as
required at iteration $\ell$ of Algorithm~\ref{alg:main}, can be
computed by (a) selecting its closed-loop eigenvalues $\lambda_i^\ell$
corresponding to each subsystem, (b) finding a vector $w \neq 0$ with
components partitioned as in (\ref{eq:13}) so that
$Q_\ell(\Lambda^\ell)w = 0$, i.e. so that $w \in \ker
Q_\ell(\Lambda^\ell)$, (c) taking the first $n_\ell$ components of
$w$, i.e. the subvector $v$ in (\ref{eq:13}), and (d) computing
$v_1^\ell = v/\|v\|$. The feedback matrices that assign such
eigenvector with corresponding eigenvalues $\lambda_i^\ell$ can be
obtained as $F_i^\ell = (r_i^\ell)^\dagger u_i v^\dagger$.

An implementation of Procedure~\cea is thus given below for the case
when the structural condition of Lemma~\ref{lem:struct} is satisfied.
\begin{procedure}[htbp]
  \SetTitleSty{texrm}{\normalsize} \SetKwComment{tcc}{\% }{}
  \KwIn{$A_i^\ell \in \R^{n_\ell \times n_\ell}$, $B_i^\ell \in
    \R^{n_\ell \times m_i}$, for $i\in\Ind$}
  \KwOut{$v_1^\ell$, $F_i^\ell$ for $i\in\Ind$} %
  Factor $B_i^\ell = b_i^\ell r_i^\ell$ with
  $b_i^\ell\in\R^{n_\ell\times m_i^\ell}$ and $m_i^\ell =
  \rank(B_i^\ell)$ \;
  \If{$p_\ell = n_\ell + \sum_{i=1}^N m_i^\ell - N n_\ell>0$} %
    {Select $\lambda_i^\ell \in \R$ so that $|\lambda_i^\ell|<1$ \;
    Find $w\neq 0$ such that $Q_\ell(\Lambda^\ell)w = 0$ \; %
    Partition $w$ as in (\ref{eq:13}) \; %
    $v_1^\ell = v/\|v\|$ \;
    $F_i^\ell = (r_i^\ell)^\dagger u_i\, v^\dagger$, for $i\in\Ind$ \;}%
  \caption{CEA () (Structural condition satisfied)}
  \label{proc:aceas}
\end{procedure}

Even if the DTSS matrices $A_i$, $B_i$ have real entries, those of the
matrices $A_i^\ell$, $B_i^\ell$ internal to Algorithm~\ref{alg:main}
can be complex at some iteration $\ell$. This is so because the vector
$v_1^\ell$ returned by Procedure~\cea (a feedback-assignable common
eigenvector) can have complex components even if $A_i^\ell$,
$B_i^\ell$ have real entries, causing $A_i^{\ell+1}$, $B_i^{\ell+1}$
to have complex entries. However, when the structural condition
$p_\ell > 0$ is satisfied, the closed-loop eigenvalues $\Lambda^\ell$
can be arbitrarily selected. Hence, selecting real closed-loop
eigenvalues will cause the vector $v_1^\ell$ to have real
components. In the sequel, we assume that real eigenvalues will be
selected and hence all matrices internal to Algorithm~\ref{alg:main}
will have real entries.

\subsection{The Structural Condition}
\label{sec:structural-condition}

If the structural condition given by Lemma~\ref{lem:struct}, namely
$p_\ell > 0$, holds at iteration $\ell$ of Algorithm~\ref{alg:main},
then Procedure~\cea can easily compute a feedback-assignable common
eigenvector and the corresponding feedback matrices, for every choice
of corresponding closed-loop eigenvalues. 
The quantity $p_\ell$ depends on $m_i^\ell$, the rank of
$B_i^\ell$. At the first iteration of Algorithm~\ref{alg:main},
i.e. when $\ell=1$, the internal matrices $B_i^1 = B_i$ have $n=n_1$
rows, $m_i$ columns, and since by assumption they have full column
rank, then $m_i^1 = m_i$. At subsequent iterations, the matrices
$B_i^{\ell}$ have $n_\ell = n - \ell + 1$ rows and $m_i$ columns.
According to (\ref{eq:62})--(\ref{eq:99}) and (\ref{eq:6}), we have
\begin{equation}
  \label{eq:2}
  m_i^\ell - 1 \le m_i^{\ell+1} \le m_i^\ell.
\end{equation}
According to (\ref{eq:62})--(\ref{eq:99}) and (\ref{eq:6}),
$m_i^{\ell+1}$ will depend on the feedback-assignable eigenvector
$v_1^\ell$ returned by Procedure~\cea:
\begin{equation}
  \label{eq:1}
  m_i^{\ell+1} =
  \begin{cases}
    m_i^\ell &\text{if }v_1^\ell \notin \img B_i^\ell,\\
    m_i^\ell - 1 &\text{if }v_1^\ell \in \img B_i^\ell.\\
  \end{cases}
\end{equation}
From (\ref{eq:1}), then $m_i^{\ell+1} = m_i^\ell - 1$ when
$m_i^\ell = n_\ell$, because necessarily in this case $v_1^\ell \in
\R^{n_\ell} = \img B_i^\ell$. The following theorem and its corollary
follow from (\ref{eq:1}) and were presented in \cite{haibra_aucc11}.
\begin{Theorem}
  \label{thm:main-results}
  Consider Algorithm~\ref{alg:main} at iteration $\ell$ and $p_\ell$
  as in (\ref{eq:8}), with $m_i^\ell = \rank(B_i^\ell)$. Then,
  $p_{\ell+1}\ge p_\ell -1$, with equality if and only if 
  \begin{equation}
    \label{eq:35}
    v_1^\ell \in \bigcap_{i\in\Ind} \img B_i^\ell.
  \end{equation}
\end{Theorem}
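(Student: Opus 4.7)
The proof will be almost entirely an algebraic bookkeeping exercise, and the only nontrivial input I will need is the dichotomy already recorded as equation~(\ref{eq:1}). The plan is to directly compute the difference $p_{\ell+1} - p_\ell$ using the definition in (\ref{eq:8}), reduce it to a sum of rank differences of the $B_i^\ell$'s, and then apply (\ref{eq:1}) to count how many of those differences are $0$ versus $-1$.

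First I would record that $n_{\ell+1} = n_\ell - 1$, which follows from the update in~(\ref{eq:5}) together with $n_\ell = n-\ell+1$. Substituting this into (\ref{eq:8}) for $p_{\ell+1}$ and collecting the $(1-N)n_\ell$ term to match $p_\ell$ gives the identity
\begin{equation*}
p_{\ell+1} = p_\ell + (N-1) + \sum_{i=1}^{N}\bigl(m_i^{\ell+1} - m_i^\ell\bigr).
\end{equation*}
At this point everything hinges on controlling the sum.

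Next I would invoke (\ref{eq:1}): for each $i\in\Ind$, the increment $m_i^{\ell+1} - m_i^\ell$ equals $0$ if $v_1^\ell\notin\img B_i^\ell$ and $-1$ if $v_1^\ell\in\img B_i^\ell$. Letting $k$ denote the number of indices $i\in\Ind$ for which $v_1^\ell\in\img B_i^\ell$, the sum above equals $-k$, so
\begin{equation*}
p_{\ell+1} = p_\ell + (N-1) - k.
\end{equation*}
Since $0\le k\le N$, the bound $p_{\ell+1}\ge p_\ell - 1$ follows immediately, and equality holds precisely when $k=N$, i.e.\ when $v_1^\ell\in\img B_i^\ell$ for every $i\in\Ind$, which is exactly~(\ref{eq:35}).

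There is no genuine obstacle to overcome here beyond being careful with the bookkeeping of $n_\ell$ versus $n_{\ell+1}$ and verifying that (\ref{eq:1}) applies to every $i$ individually (it does, since the unitary reduction in (\ref{eq:62})--(\ref{eq:6}) is applied simultaneously to all subsystems using the same $U_{\ell+1}$, and its first removed column is $v_1^\ell$). So the entire proof reduces to the two-line computation above plus a one-line case split.
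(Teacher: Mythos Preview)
Your argument is correct and is exactly the derivation the paper intends: the paper does not spell out a proof but states that the theorem ``follows from~(\ref{eq:1})'', and your computation of $p_{\ell+1}-p_\ell = (N-1)-k$ via~(\ref{eq:1}) is precisely how one makes that implication explicit.
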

\begin{Corollary}
\label{cor:fail-gCEAS}
  Let $p_\ell > 0$. Then,
  \begin{enumerate}[(a)]\itemsep-0pt
  \item $p_q > 0$ for $q=\ell,\ldots,\ell+p_\ell-1$.\label{item:5}
  \item $p_{\ell+1} > 0$ if $v_1^\ell \notin \img B_k^\ell$ for some $k\in\Ind$.\label{item:12}
  \item \label{item:9} $p_{\ell+1} \not> 0$ if and only if $p_\ell =
    1$ and (\ref{eq:35}).
  \end{enumerate}
\end{Corollary}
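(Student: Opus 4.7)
The plan is to derive all three items as direct consequences of Theorem~\ref{thm:main-results}, which provides the inequality $p_{\ell+1} \ge p_\ell - 1$ together with the characterization of equality via condition (\ref{eq:35}). No new machinery should be needed; the corollary is essentially an unpacking of the theorem, and the bookkeeping is about integer inequalities.

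For part (\ref{item:5}), I would proceed by induction on the offset $j = q - \ell$. Theorem~\ref{thm:main-results} gives $p_{q+1} \ge p_q - 1$ at every iteration, so iterating yields $p_{\ell+j} \ge p_\ell - j$. Taking $j \in \{0,1,\dots,p_\ell-1\}$ gives $p_{\ell+j} \ge 1 > 0$, which is the claim. The only subtlety is that the theorem's inequality must hold at each intermediate iteration, but since we only invoke it at iterations where $p_q \ge 1$ (and hence Algorithm~\ref{alg:main} can produce a feedback-assignable $v_1^q$ via Procedure~\cea by Lemma~\ref{lem:struct}\eqref{item:3}), this is automatic.

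For part (\ref{item:12}), if $v_1^\ell \notin \img B_k^\ell$ for some $k \in \Ind$, then condition (\ref{eq:35}) fails, so the inequality of Theorem~\ref{thm:main-results} is strict, giving $p_{\ell+1} \ge p_\ell > 0$.

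For part (\ref{item:9}), observe that because $p_\ell \ge 1$, Theorem~\ref{thm:main-results} gives $p_{\ell+1} \ge 0$, so $p_{\ell+1} \not> 0$ is equivalent to $p_{\ell+1} = 0$. If $p_{\ell+1} = 0$, then $p_{\ell+1} = p_\ell - 1$ forces $p_\ell = 1$ (using $p_{\ell+1} \ge p_\ell - 1$ and $p_\ell \ge 1$), and the equality in $p_{\ell+1} = p_\ell - 1$ forces condition (\ref{eq:35}) by Theorem~\ref{thm:main-results}. Conversely, if $p_\ell = 1$ and (\ref{eq:35}) holds, then the equality clause of Theorem~\ref{thm:main-results} gives $p_{\ell+1} = p_\ell - 1 = 0$.

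I do not anticipate a genuine obstacle: everything reduces to the recursion $p_{\ell+1} \ge p_\ell - 1$ and its equality condition, both supplied by Theorem~\ref{thm:main-results}. The only place where care is needed is in part (\ref{item:5}), to make sure that Procedure~\cea indeed runs at each intermediate iteration so that Theorem~\ref{thm:main-results} is applicable there, which follows by induction from $p_q \ge 1$ and Lemma~\ref{lem:struct}\eqref{item:3}.
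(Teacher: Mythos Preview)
Your proposal is correct and matches the paper's approach: the paper does not give a separate proof of the corollary but simply states that both Theorem~\ref{thm:main-results} and the corollary follow from~(\ref{eq:1}) (citing prior work), and your argument is precisely the natural unpacking of Theorem~\ref{thm:main-results} via the recursion $p_{\ell+1}\ge p_\ell-1$ and its equality clause. Your attention to the applicability of the recursion at each intermediate step in part~(\ref{item:5}) is appropriate and consistent with how the paper frames the algorithm.
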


\section{MAIN RESULTS}
\label{sec:contr-eigvect}

In this section, we derive conditions to ensure that the structural
condition $p_\ell > 0$ will hold for $\ell=1,\ldots,n$. We will
achieve this goal by looking more deeply into the condition
(\ref{eq:35}). In Section~\ref{sec:maps-subspaces}, we recall a
property of subspaces that is required for the derivation of our main
results in Section~\ref{sec:induct-gener-struct}.

\subsection{Transversality of Subspaces}
\label{sec:maps-subspaces}

We next recall the property of transversality of subspaces (see, e.g.,
Chapter~0 of \cite{wonham_book85}). 
\begin{Definition}[Transverse]
  Two subspaces $\Sb,\Tb$ of an ambient space $\X$ are said to be
  transverse when the dimension of their intersection is minimal,
  i.e. when
  \begin{equation}
    \label{eq:57}
    \dim(\Sb \cap \Tb) = \max\{0,\dim(\Sb) + \dim(\Tb) - \dim(\X)\}.
  \end{equation}
  Equivalently, $\Sb$ and $\Tb$ are transverse when the dimension of
  their sum is maximal. We extend this definition to sets of subspaces
  as follows. Let $S = \{\Sb_1,\ldots,\Sb_N\}$ be a set of subspaces
  of an ambient space $\X$. We say that $S$ is transverse when both
  the intersection of the subspaces in every subset of $S$ has minimal
  dimension and the sum of the subspaces in every subset of $S$ has
  maximal dimension.
\end{Definition}

The following properties of subspaces can be
straightforwardly established.
\begin{Lemma}
  \label{lem:transverse}
  Let $S = \{\Sb_1,\ldots,\Sb_N\}$ be a set of subspaces of the
  ambient space $\X$, and define
  \begin{equation*}
    p \dfn \dim(\X) + \sum_{i\in\Ind} \dim(\Sb_i) - N\dim(\X).
  \end{equation*}
  Then,
  \begin{enumerate}[(a)]\itemsep-0pt
  \item \label{item:2} $\dim(\Sb_i \cap \Sb_j) = \dim(\Sb_i) +
    \dim(\Sb_j) - \dim(\Sb_i+\Sb_j)$.
  \item \label{item:7} If $S$ is transverse, then
    $\dim\left(\bigcap_{i\in\Ind} \Sb_i \right) =
    \max \left\{0,p \right\}$.
  \item \label{item:8} If $S$ is transverse and $p \ge 0$, then
    $\dim(\Sb_i + \Sb_j) = \dim(\X)$ for all $i,j \in \Ind$ with
    $i\neq j$.
  \item \label{item:6} Let $J = I \cup \{j\}$, with $J\subset\Ind$ and
    $\# J = \# I + 1$. Suppose that $p\ge 0$ and that $\{\Sb_i : i\in
    I\}$ is transverse.
    Then, $\{\Sb_i : i\in J\}$ is transverse if and only if
    $\bigcap_{i\in I} \Sb_i + \Sb_j = \X$.
  \end{enumerate}
\end{Lemma}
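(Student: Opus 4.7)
The plan is to prove the four parts in sequence, using (a) as the basic tool and deriving (b)--(d) by induction and dimension bookkeeping.

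For (a), the standard basis-extension argument suffices: start with a basis of $\Sb_i\cap\Sb_j$, extend it independently to bases of $\Sb_i$ and of $\Sb_j$, and verify that the union is a basis of $\Sb_i+\Sb_j$. This yields the Grassmann dimension formula directly.

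For (b), I set $T_k \dfn \bigcap_{r=1}^k \Sb_r$ and argue by induction on $k$. Part (a) gives $\dim T_{k+1} = \dim T_k + \dim\Sb_{k+1} - \dim(T_k+\Sb_{k+1})$, and the trivial bound $\dim(T_k+\Sb_{k+1})\le\dim\X$, iterated over $k=1,\ldots,N-1$, yields the universal lower bound $\dim T_N \ge p$; together with $\dim T_N\ge 0$ this gives $\dim T_N \ge \max\{0,p\}$ for every configuration. The transversality hypothesis asserts that $\dim T_N$ attains this minimum, which supplies equality. For (c), observe that $p\ge 0$ rewrites as $\sum_{k\in\Ind}\dim\Sb_k\ge(N-1)\dim\X$; bounding the sum over $k\notin\{i,j\}$ by $(N-2)\dim\X$ yields $\dim\Sb_i+\dim\Sb_j\ge\dim\X$, and since the pair $\{\Sb_i,\Sb_j\}$ is transverse as a subset of $S$, its sum must achieve $\min\{\dim\X,\dim\Sb_i+\dim\Sb_j\}=\dim\X$.

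For (d), the forward direction follows from noting that $p_I$ and $p_J$ (the analogues of $p$ for the subsets $I$ and $J$) are both nonnegative, since $p_I-p=\sum_{k\notin I}(\dim\X-\dim\Sb_k)\ge 0$ and similarly for $p_J$; hence (b) applies to both, and (a) forces $\dim\bigl(\bigcap_{i\in I}\Sb_i+\Sb_j\bigr) = p_I + \dim\Sb_j - p_J = \dim\X$. For the converse, assume $\bigcap_{i\in I}\Sb_i+\Sb_j=\X$ and consider any subset of $J$. Subsets not containing $j$ are covered by the assumed transversality of $\{\Sb_i:i\in I\}$. For subsets of the form $I'\cup\{j\}$ with $I'\subseteq I$, the inclusion $\bigcap_{i\in I'}\Sb_i\supseteq\bigcap_{i\in I}\Sb_i$ gives $\bigcap_{i\in I'}\Sb_i+\Sb_j=\X$, so applying (a) yields the minimal-intersection dimension for $I'\cup\{j\}$; similarly $\bigcap_{i\in I'}\Sb_i\subseteq\sum_{i\in I'}\Sb_i$ yields $\sum_{i\in I'}\Sb_i+\Sb_j=\X$, establishing maximal sum. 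This delivers transversality of $\{\Sb_i:i\in J\}$.

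The main obstacle I anticipate is the bookkeeping in (d), where one must verify the minimal-intersection and maximal-sum conditions for every subset of $J$---not just $I$ and $J$ themselves---and track how $p\ge 0$ propagates to each $p_{I'}$ via the chain of inequalities from (c). Once (a) is cast as the identity used above, the remaining dimension arithmetic is routine.
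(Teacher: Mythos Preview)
Your proposal is correct and follows essentially the same route as the paper. The paper in fact proves only part~(d) explicitly, declaring (a)--(c) ``straightforwardly established''; your treatments of (a)--(c) are the standard arguments one would expect. For~(d), your forward direction matches the paper's: establish $p_I,p_J\ge 0$ from $p\ge 0$, apply~(b) to both subsets, and use~(a) to extract $\dim(\bigcap_{i\in I}\Sb_i+\Sb_j)=\dim\X$. For the converse, both you and the paper use the monotonicity $\bigcap_{i\in I}\Sb_i\subseteq\bigcap_{i\in K}\Sb_i$ for $K\subset I$ to push the equality $\bigcap_{i\in K}\Sb_i+\Sb_j=\X$ to all such $K$, then read off minimal intersections via~(a) and maximal sums by inclusion; the only cosmetic difference is that the paper singles out $K=\{k\}$ to handle the sum condition, whereas you invoke $\bigcap_{i\in I'}\Sb_i\subseteq\sum_{i\in I'}\Sb_i$ directly.
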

\begin{proof}[Proof of Lemma~\ref{lem:transverse}(\ref{item:6})]
  ($\Rightarrow$) For a set $K \subset \Ind$, define $p_K = \dim(\X) +
  \sum_{i\in K} \dim(\Sb_i) - \# K \dim(\X)$. Since $p_{\Ind} = p \ge
  0$, then $p_I \ge 0$ and $p_J \ge 0$ because $\dim(\Sb_i) \le
  \dim(\X)$ for all $i\in\Ind$. By
  Lemma~\ref{lem:transverse}(\ref{item:7}) and since $p_I\ge 0$ and
  $p_J\ge 0$, then $\dim(\bigcap_{i\in I} \Sb_i) = p_I$ and
  $\dim(\bigcap_{i\in J} \Sb_i) = p_J$. By
  Lemma~\ref{lem:transverse}(\ref{item:2}), we have
  \begin{align}
    \dim(\bigcap_{i\in J} \Sb_i) &= \dim(\bigcap_{i\in I} \Sb_i) +
    \dim(\Sb_j) - \dim(\bigcap_{i\in I} \Sb_i + \Sb_j)\notag\\
    \label{eq:7}
    &= p_J = p_I + \dim(\Sb_j) - \dim(\bigcap_{i\in I} \Sb_i + \Sb_j).
  \end{align}
  Necessity is established by substituting the expressions for $p_I$
  and $p_J$ into (\ref{eq:7}) and recalling that $\# J = \# I + 1$.

  ($\Leftarrow$) Let $K \subset I$. We have $\dim(\X) =
  \dim(\bigcap_{i\in I} \Sb_i + \Sb_j) \le \dim(\bigcap_{i\in K} \Sb_i
  + \Sb_j) \le \dim(\X)$. Taking $K=\{k\}$, jointly with the fact that
  $\{\Sb_i : i\in I\}$ is transverse, establishes that the dimension
  of the sum of the subspaces in every subset of $\{\Sb_i : i \in J\}$
  has maximum dimension. Also, we have
  \begin{align*}
    \dim(\bigcap_{i\in K} \Sb_i \cap \Sb_j) = \dim(\bigcap_{i\in K}
    \Sb_i) + \dim(\Sb_j) - \underbrace{\dim(\bigcap_{i\in K} \Sb_i + \Sb_j)}_{\dim(\X)},
  \end{align*}
  which, jointly with the fact that $\{\Sb_i : i\in I\}$ is
  transverse, establishes that the dimension of the intersection of
  the subspaces in every subset of $\{\Sb_i : i\in J\}$ has minimum
  dimension. 
\end{proof}
As is well known \cite{wonham_book85}, the property of transversality
is generic, i.e. it is satisfied for almost every set $S$ composed of
a finite number of subspaces of $\X$ selected ``randomly'' among all
subspaces of $\X$.

\subsection{Genericity of the SLASF Property}
\label{sec:induct-gener-struct}

As previously mentioned, we will look more deeply into the condition
(\ref{eq:35}). We define the following
\begin{equation}
  \label{eq:56}
  \B_i^\ell \dfn \img B_i^\ell,\qquad \B^\ell \dfn \bigcap_{i\in\Ind}
  \B_i^\ell.
\end{equation}
According to (\ref{eq:56}), then (\ref{eq:35}) can be rewritten as
$v_1^\ell \in \B^\ell$. Recall that $v_1^\ell$ is a
feedback-assignable common eigenvector. The following result is
straightforward.
\begin{Lemma}
  \label{lem:evimBsub}
  Let $\Sb_i^\ell$ be the set of vectors $v \in\B_i^\ell$ for which
  there exist $F_i^\ell$ and $\lambda$ with $|\lambda| < 1$ so that
  \begin{equation}
    \label{eq:1b}
    (A_i^\ell + B_i^\ell F_i^\ell)v = \lambda v.
  \end{equation}
  \begin{enumerate}[(a)]\itemsep-0pt
  \item The set $\Sb_i^\ell$ is a subspace.\label{item:10}
  \item $v\in\Sb_i^\ell$ if and only if $v\in\B_i^\ell$ and
    $A_i^\ell v \in\B_i^\ell$.\label{item:11}
  \end{enumerate}
\end{Lemma}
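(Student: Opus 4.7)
The plan is to prove part~(\ref{item:11}) first, and then deduce part~(\ref{item:10}) as an immediate corollary, since the characterisation in~(\ref{item:11}) exhibits $\Sb_i^\ell$ as an intersection of two obviously-linear sets.

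For the forward direction of~(\ref{item:11}), I would start from the defining identity $(A_i^\ell + B_i^\ell F_i^\ell)v = \lambda v$ and solve for $A_i^\ell v = \lambda v - B_i^\ell F_i^\ell v$. The first term lies in $\B_i^\ell$ because $v \in \B_i^\ell$ by hypothesis, and the second lies in $\img B_i^\ell = \B_i^\ell$ by construction, so $A_i^\ell v \in \B_i^\ell$. This direction uses nothing beyond the hypothesis $v\in\B_i^\ell$ and the definition of $\B_i^\ell$.

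For the reverse direction of~(\ref{item:11}), I would argue constructively. Given $v\in\B_i^\ell$ with $A_i^\ell v \in \B_i^\ell$, the case $v=0$ is trivial, so assume $v\neq 0$. Choose any $\lambda\in\R$ with $|\lambda|<1$ (e.g.\ $\lambda = 0$). Then $\lambda v - A_i^\ell v$ is a linear combination of two elements of $\B_i^\ell$ and hence lies in $\B_i^\ell = \img B_i^\ell$, so there exists $u$ with $B_i^\ell u = \lambda v - A_i^\ell v$. Setting $F_i^\ell \dfn u\, v^\dagger$ (using that $v^\dagger v = 1$ since $v\neq 0$) yields $B_i^\ell F_i^\ell v = B_i^\ell u$, and therefore $(A_i^\ell + B_i^\ell F_i^\ell)v = A_i^\ell v + (\lambda v - A_i^\ell v) = \lambda v$, so $v\in\Sb_i^\ell$.

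Finally, for part~(\ref{item:10}), by~(\ref{item:11}) we have $\Sb_i^\ell = \B_i^\ell \cap (A_i^\ell)^{-1}(\B_i^\ell)$, which is the intersection of a subspace with the preimage of a subspace under a linear map, and hence is itself a subspace. There is no substantive obstacle in this proof; the only point requiring mild care is verifying that the constructed $F_i^\ell$ is well-defined (the use of $v^\dagger$ requires $v\neq 0$) and that the admissibility constraint $|\lambda|<1$ imposes no restriction on $v$, since the right-hand side $\lambda v - A_i^\ell v$ remains in $\B_i^\ell$ for every scalar $\lambda$.
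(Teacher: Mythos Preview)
Your proof is correct. The paper does not supply a proof for this lemma; it simply labels the result ``straightforward'' and moves on. Your argument fills in exactly the natural details one would expect: the forward direction of~(\ref{item:11}) by rearranging the eigenvector identity, the reverse direction by an explicit construction of $F_i^\ell = u\,v^\dagger$ (mirroring the device used in Lemma~\ref{lem:struct}(\ref{item:4})), and part~(\ref{item:10}) via the clean description $\Sb_i^\ell = \B_i^\ell \cap (A_i^\ell)^{-1}\B_i^\ell$.
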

By definition, $\Sb_i^\ell$ is the set of feedback-assignable
eigenvectors for the subsystem $(A_i^\ell, B_i^\ell)$ that are
contained in $\B_i^\ell$. Consequently, $v_1^\ell \in \B_i^\ell$ if
and only if $v_1^\ell \in \Sb_i^\ell$. In the sequel, we will employ
the following.
\begin{alignat}{2}
  \label{eq:2b}
  \rho_i^\ell &\dfn \dim(\Sb_i^\ell),&\qquad
  q_\ell &\dfn n_\ell + \sum_{i\in\Ind} \rho_i^\ell - Nn_\ell,\\
  \label{eq:3}
  \Sb^\ell &\dfn \bigcap_{i\in\Ind} \Sb_i^\ell, &\qquad \rho^\ell &\dfn
  \dim(\Sb^\ell).
\end{alignat}

Lemma~\ref{lem:dimcs} below relates the dimension of the subspace
$\Sb_i^\ell$ to the controllability indices of $(A_i^\ell,B_i^\ell)$.
\begin{Lemma}
  \label{lem:dimcs}
  The dimension $\dim(\Sb_i^\ell)=\rho_i^\ell$ equals the number of
  controllability indices equal to 1 of $(A_i^\ell,B_i^\ell)$.
\end{Lemma}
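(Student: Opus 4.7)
The plan is to exploit the purely geometric characterization of $\Sb_i^\ell$ provided by Lemma~\ref{lem:evimBsub}(\ref{item:11}), namely $\Sb_i^\ell = \{v \in \B_i^\ell : A_i^\ell v \in \B_i^\ell\}$, which has already eliminated any dependence on $F_i^\ell$ and on the eigenvalue $\lambda$. From this description, computing $\dim(\Sb_i^\ell)$ reduces to a short rank--nullity argument, and the result can then be matched against a standard formula for the number of controllability indices of $(A_i^\ell,B_i^\ell)$ equal to $1$.

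First I would view $\Sb_i^\ell$ as the kernel of the linear map
$$\phi_i^\ell : \B_i^\ell \longrightarrow \R^{n_\ell} / \B_i^\ell, \qquad v \longmapsto A_i^\ell v + \B_i^\ell.$$
Its image equals $(\B_i^\ell + A_i^\ell \B_i^\ell)/\B_i^\ell$, which has dimension $\dim(\B_i^\ell + A_i^\ell \B_i^\ell) - m_i^\ell$. Rank--nullity applied to $\phi_i^\ell$ then yields
$$\rho_i^\ell = \dim(\Sb_i^\ell) = 2 m_i^\ell - \dim(\B_i^\ell + A_i^\ell \B_i^\ell).$$

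Next I would invoke the standard geometric description of the controllability indices of $(A_i^\ell,B_i^\ell)$ as the conjugate partition of the sequence $d_j \dfn \dim(\Reg_j^i) - \dim(\Reg_{j-1}^i)$, where $\Reg_j^i \dfn \sum_{k=0}^{j-1} (A_i^\ell)^k \B_i^\ell$ and $\Reg_0^i \dfn \{0\}$. In particular $d_j$ counts the controllability indices that are at least $j$, so the number equal to exactly $1$ is
$$d_1 - d_2 = m_i^\ell - \bigl(\dim(\B_i^\ell + A_i^\ell \B_i^\ell) - m_i^\ell\bigr) = 2 m_i^\ell - \dim(\B_i^\ell + A_i^\ell \B_i^\ell),$$
which coincides with $\rho_i^\ell$ as computed above, establishing the claim.

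The main point requiring care is that, after a few iterations of Algorithm~\ref{alg:main}, the pair $(A_i^\ell,B_i^\ell)$ is not necessarily controllable and $B_i^\ell$ need not have full column rank. To apply the controllability-index formula cleanly I would work throughout with the subspace $\B_i^\ell = \img B_i^\ell = \img b_i^\ell$: since the chain $\{\Reg_j^i\}$ depends only on $A_i^\ell$ and on $\B_i^\ell$, the same conjugate-partition characterization still delivers the controllability indices of $(A_i^\ell,B_i^\ell)$, and the remainder of the argument is routine linear algebra.
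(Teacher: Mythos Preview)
Your proof is correct and follows essentially the same approach as the paper's: both establish $\rho_i^\ell = 2m_i^\ell - \dim(\B_i^\ell + A_i^\ell \B_i^\ell)$ and then identify this quantity with the standard formula for the number of controllability indices equal to $1$. The only difference is cosmetic---where the paper computes $\rank[\beta_i^\ell,\,A_i^\ell\beta_i^\ell]$ via an explicit basis decomposition $\B_i^\ell = \hat\B_i^\ell \oplus \Sb_i^\ell$ and a linear-independence argument, you obtain the same identity more directly via rank--nullity applied to the quotient map $\phi_i^\ell$.
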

\begin{proof}
  According to the standard construction for the controllability
  indices of a system (see, e.g. \cite{wonham_book85}), it follows
  that the number of controllability indices equal to 1 of
  $(A_i^\ell,B_i^\ell)$ is given by $2m_i^\ell - \rank[\beta_i^\ell,
  A_i^\ell \beta_i^\ell]$, where $\beta_i^\ell$ is any matrix
  satisfying $\img \beta_i^\ell = \img B_i^\ell$. Since $\Sb_i^\ell
  \subset \B_i^\ell$, write $\B_i^\ell = \hat\B_i^\ell \oplus
  \Sb_i^\ell$ and let $\alpha = \dim(\hat\B_i^\ell)$. Then,
  $\rho_i^\ell = m_i^\ell - \alpha$. Let $\{b_1,\ldots,b_\alpha\}$ be
  a basis for $\hat\B_i^\ell$, $\{b_{\alpha+1},\ldots,b_{m_i^\ell}\}$
  be a basis for $\Sb_i^\ell$, and $\beta_i^\ell =
  [b_1,\ldots,b_{m_i^\ell}]$. By
  Lemma~\ref{lem:evimBsub}(\ref{item:11}), $A_i^\ell b_k \notin
  \B_i^\ell$ for $k=1,\ldots,\alpha$ and $A_i^\ell b_k \in \B_i^\ell$
  for $k=\alpha+1,\ldots,m_i^\ell$. Therefore, $\rank[\beta_i^\ell,
  A_i^\ell \beta_i^\ell] \le m_i^\ell + \alpha$. If
  $\rank[\beta_i^\ell, A_i^\ell \beta_i^\ell] < m_i^\ell + \alpha$,
  then $\sum_{j=1}^{m_i^\ell} c_j b_j + \sum_{k=1}^{\alpha} d_k
  A_i^\ell b_k = 0$ for some scalars $c_j$ and $d_k$, where not all
  the $d_k$ are zero. Then, $A_i^\ell \sum_{k=1}^{\alpha} d_k b_k \in
  \B_i^\ell$ and $A_i^\ell \sum_{k=1}^{\alpha} d_k b_k \notin
  \Sb_i^\ell$, a contradiction. Therefore, $\rank[\beta_i^\ell
  A_i^\ell \beta_i^\ell] = m_i^\ell + \alpha$ and $\rho_i^\ell =
  m_i^\ell - \alpha = 2m_i^\ell - \rank[\beta_i^\ell A_i^\ell
  \beta_i^\ell]$.
\end{proof}

Our main result is given below as Theorem~\ref{thm:genericity}. We
will provide comments and explanations after its proof. The proof of
Theorem~\ref{thm:genericity} requires an additional result, given as
Lemma~\ref{lem:cinditer}.
\begin{Theorem}
  \label{thm:genericity}
  Let 
  $\{\Sb_i^1 : i\in\Ind\}$ be transverse, 
  and $q_1 \ge 0$. Then, $p_\ell > 0$ for $\ell=1,\ldots,n$.
\end{Theorem}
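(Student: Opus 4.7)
The plan is to proceed by induction on $\ell$, carrying the invariant that $\{\Sb_i^\ell : i\in\Ind\}$ is transverse and $q_\ell\ge 0$ at each iteration $\ell=1,\ldots,n$. The hypothesis of the theorem provides the invariant at $\ell=1$. Given the invariant at iteration $\ell$, I need to (a)~deduce $p_\ell>0$, and (b)~choose $v_1^\ell$ so that the invariant persists at iteration $\ell+1$.

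For (a), I would use the identity
\[
  p_\ell \;=\; q_\ell + \sum_{i\in\Ind}(m_i^\ell-\rho_i^\ell),
\]
every summand of the second term being nonnegative since $\Sb_i^\ell\subset\B_i^\ell$. If some $m_i^\ell>\rho_i^\ell$, then $p_\ell\ge q_\ell+1\ge 1$. Otherwise $m_i^\ell=\rho_i^\ell$ for every $i$, which by Lemma~\ref{lem:dimcs} is equivalent to $A_i^\ell\B_i^\ell\subset\B_i^\ell$ for every $i$. Since controllability of $(A_i,B_i)$ is preserved by feedback and by quotienting by the common eigenvector produced at each previous step, $(A_i^\ell,B_i^\ell)$ remains controllable; combined with $A_i^\ell$-invariance of $\B_i^\ell$ this forces $\B_i^\ell=\R^{n_\ell}$, so that $p_\ell=n_\ell\ge 1$.

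For (b), I would first establish a $q$-counterpart of Theorem~\ref{thm:main-results} by tracking how each $\Sb_i^\ell$ descends under the unitary reduction $U_{\ell+1}^*$: a direct calculation yields $\rho_i^{\ell+1}=\rho_i^\ell-1$ if $v_1^\ell\in\Sb_i^\ell$, $\rho_i^{\ell+1}=\rho_i^\ell$ if $v_1^\ell\in\B_i^\ell\setminus\Sb_i^\ell$, and $\rho_i^{\ell+1}\ge\rho_i^\ell$ otherwise, whence $q_{\ell+1}\ge q_\ell-1$ with equality iff $v_1^\ell\in\Sb^\ell$. Two cases then arise. If $q_\ell>0$, then $\dim\Sb^\ell=q_\ell>0$ by Lemma~\ref{lem:transverse}(\ref{item:7}) and I pick a feedback-assignable common eigenvector $v_1^\ell\in\Sb^\ell$ (existing by Lemma~\ref{lem:evimBsub}), obtaining $q_{\ell+1}=q_\ell-1\ge 0$. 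If $q_\ell=0$, then $\Sb^\ell=\{0\}$ and, by Lemma~\ref{lem:evimBsub}, any feedback-assignable common eigenvector — which exists because $p_\ell>0$ — must lie outside $\B^\ell$; choosing such a $v_1^\ell$ gives $q_{\ell+1}\ge q_\ell=0$ and, via Corollary~\ref{cor:fail-gCEAS}(\ref{item:12}), $p_{\ell+1}>0$ directly.

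The main obstacle is verifying that transversality of $\{\Sb_i^{\ell+1}\}$ carries through the iteration, because, unlike the images $\B_i^\ell$, the subspaces $\Sb_i^\ell$ depend on both $A_i^\ell$ and $B_i^\ell$ and do not push forward cleanly under $U_{\ell+1}^*$. Here I would invoke the auxiliary Lemma~\ref{lem:cinditer} to obtain explicit descriptions of each $\Sb_i^{\ell+1}$, and then apply Lemma~\ref{lem:transverse}(\ref{item:6}) inductively on subset size: for each $I\subsetneq\Ind$ and $j\in\Ind\setminus I$, it suffices to verify the sum condition $\bigcap_{i\in I}\Sb_i^{\ell+1}+\Sb_j^{\ell+1}=\R^{n_{\ell+1}}$ at the reduced level. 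This subset-by-subset verification, rather than any single global dimension count, is the technical heart of the proof.
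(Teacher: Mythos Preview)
Your plan matches the paper's proof almost exactly: the same inductive invariant (transversality of $\{\Sb_i^\ell\}$ together with $q_\ell\ge 0$ and controllability), the same use of Lemma~\ref{lem:cinditer} to control $\rho_i^{\ell+1}$, and the same subset-by-subset use of Lemma~\ref{lem:transverse}(\ref{item:6}) to push transversality through $U_{\ell+1}^*$.

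The one substantive difference is your handling of the case $q_\ell>0$: you \emph{choose} $v_1^\ell\in\Sb^\ell$. The paper does not do this, and strictly speaking you should not either, because Procedure~CEA as written in Algorithm~\ref{alg:main} returns an arbitrary nonzero vector in $\ker Q_\ell(\Lambda^\ell)$ and is not constrained to pick $v_1^\ell\in\Sb^\ell$. The theorem is meant to guarantee $p_\ell>0$ for the algorithm as given, so the invariant must persist for \emph{any} feedback-assignable common eigenvector $v_1^\ell$. Your own preliminary observation already yields this: $q_{\ell+1}\ge q_\ell-1$ for every choice, so $q_\ell>0$ gives $q_{\ell+1}\ge 0$ with no selection needed. (Incidentally, your intermediate case $v_1^\ell\in\B_i^\ell\setminus\Sb_i^\ell$ is vacuous here, since a feedback-assignable eigenvector with stable eigenvalue that lies in $\B_i^\ell$ is automatically in $\Sb_i^\ell$ by Lemma~\ref{lem:evimBsub}.) With that small adjustment, your argument coincides with the paper's.
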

\begin{Lemma}
  \label{lem:cinditer}
  Consider Algorithm~\ref{alg:main} at iteration $\ell$. Suppose that
  $(A_i^\ell, B_i^\ell)$ is controllable and $A_i^{\ell,\CL} v_1^\ell
  = \lambda_i^\ell v_1^\ell$ with $v_1^\ell \neq 0$ and scalar
  $\lambda_i^\ell$. Then, $\Sb_i^{\ell+1} \supset U_{\ell+1}^*
  \Sb_i^\ell$, $(A_i^{\ell+1},B_i^{\ell+1})$ is controllable, and
  \begin{gather}
    \label{eq:42}
    \rho_i^{\ell+1} 
    \begin{cases}
      = \rho_i^\ell - 1 &\text{if } v_1^\ell\in\Sb_i^\ell,\\
      \ge \rho_i^\ell   &\text{otherwise.}
    \end{cases}
  \end{gather}
\end{Lemma}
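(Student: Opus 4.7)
The plan is to exploit unitarity of $[v_1^\ell\,|\,U_{\ell+1}]$, which yields $U_{\ell+1}^*U_{\ell+1}=\id$, $U_{\ell+1}U_{\ell+1}^*=\id-v_1^\ell(v_1^\ell)^*$, and $\ker U_{\ell+1}^*=\spn\{v_1^\ell\}$, together with the hypothesis $A_i^{\ell,\CL}v_1^\ell=\lambda_i^\ell v_1^\ell$, which makes any subtracted $v_1^\ell$ contribution harmless. For controllability, I will use the PBH test and feedback invariance: $(A_i^{\ell,\CL},B_i^\ell)$ inherits controllability from $(A_i^\ell,B_i^\ell)$, and in the unitary basis $T=[v_1^\ell\,|\,U_{\ell+1}]$ the eigen-relation forces $T^*A_i^{\ell,\CL}T$ to be block upper triangular with lower-right block $A_i^{\ell+1}$, while $T^*B_i^\ell$ has bottom block $B_i^{\ell+1}$. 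Any left PBH null vector of $[\mu\id-A_i^{\ell+1},B_i^{\ell+1}]$ would extend by a leading zero into a left null vector of $[\mu\id-T^*A_i^{\ell,\CL}T,T^*B_i^\ell]$, contradicting controllability of $(A_i^{\ell,\CL},B_i^\ell)$.

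For the inclusion $U_{\ell+1}^*\Sb_i^\ell\subset\Sb_i^{\ell+1}$, I will take $v\in\Sb_i^\ell$, set $w=U_{\ell+1}^*v$, and verify both conditions of Lemma~\ref{lem:evimBsub}(\ref{item:11}). Writing $v=B_i^\ell u$ gives $w=B_i^{\ell+1}u\in\B_i^{\ell+1}$. For the second condition, I compute $A_i^{\ell+1}w=U_{\ell+1}^*A_i^{\ell,\CL}U_{\ell+1}U_{\ell+1}^*v=U_{\ell+1}^*A_i^{\ell,\CL}(v-v_1^\ell(v_1^\ell)^*v)=U_{\ell+1}^*A_i^{\ell,\CL}v$, using the eigen-relation and $U_{\ell+1}^*v_1^\ell=0$. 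Since $A_i^\ell v\in\B_i^\ell$ and $B_i^\ell F_i^\ell v\in\B_i^\ell$, $A_i^{\ell,\CL}v\in\B_i^\ell$, so $A_i^{\ell+1}w\in U_{\ell+1}^*\B_i^\ell=\B_i^{\ell+1}$, which finishes the inclusion.

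For the dimension formula, the restriction $U_{\ell+1}^*|_{\Sb_i^\ell}$ has kernel $\Sb_i^\ell\cap\spn\{v_1^\ell\}$, one-dimensional when $v_1^\ell\in\Sb_i^\ell$ and trivial otherwise; combined with the inclusion, this at once yields $\rho_i^{\ell+1}\ge\rho_i^\ell-1$ in the first case and $\rho_i^{\ell+1}\ge\rho_i^\ell$ in the second, which already handles the ``otherwise'' branch. The main obstacle is the matching upper bound when $v_1^\ell\in\Sb_i^\ell$, for which I will prove $\Sb_i^{\ell+1}\subset U_{\ell+1}^*\Sb_i^\ell$. Given $w\in\Sb_i^{\ell+1}$, I set $v=U_{\ell+1}w$ so that $v\perp v_1^\ell$ and $U_{\ell+1}^*v=w$. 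Writing $w=U_{\ell+1}^*B_i^\ell u$ yields $v=B_i^\ell u-v_1^\ell(v_1^\ell)^*B_i^\ell u$, and because $v_1^\ell\in\Sb_i^\ell\subset\B_i^\ell$ the offset is absorbed, giving $v\in\B_i^\ell$. Similarly, $A_i^{\ell+1}w\in\B_i^{\ell+1}$ forces $A_i^{\ell,\CL}v\in\B_i^\ell+\spn\{v_1^\ell\}=\B_i^\ell$, so $A_i^\ell v=A_i^{\ell,\CL}v-B_i^\ell F_i^\ell v\in\B_i^\ell$, and Lemma~\ref{lem:evimBsub}(\ref{item:11}) delivers $v\in\Sb_i^\ell$. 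Hence $w\in U_{\ell+1}^*\Sb_i^\ell$, closing the upper bound.
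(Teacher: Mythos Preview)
Your argument is correct and takes a genuinely different route from the paper's proof. The paper works entirely through controllability indices: it builds a basis $\{(A_i^{\ell,\CL})^k t_j\}$ of $\R^{n_\ell}$ adapted to the indices $\kappa_{i,j}^\ell$ of $(A_i^{\ell,\CL},B_i^\ell)$, locates $v_1^\ell$ in that basis, replaces one basis vector by $v_1^\ell$, and then reads off that exactly one controllability index drops by one in the reduced pair $(A_i^{\ell+1},B_i^{\ell+1})$. Lemma~\ref{lem:dimcs} then translates the index count into $\rho_i^{\ell+1}$, with the case split determined by whether the dropped index was equal to $1$ (equivalently, $v_1^\ell\in\Sb_i^\ell$). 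Controllability of the reduced pair is obtained as a by-product of tracking the indices.

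Your approach bypasses the controllability-index bookkeeping altogether. You prove controllability by the PBH test via the block upper-triangular form in the basis $[v_1^\ell\,|\,U_{\ell+1}]$, and you handle $\rho_i^{\ell+1}$ by establishing the set inclusions $U_{\ell+1}^*\Sb_i^\ell\subset\Sb_i^{\ell+1}$ (always) and $\Sb_i^{\ell+1}\subset U_{\ell+1}^*\Sb_i^\ell$ (when $v_1^\ell\in\Sb_i^\ell$), both directly from the characterisation $\Sb_i^\ell=\{v\in\B_i^\ell:A_i^\ell v\in\B_i^\ell\}$ of Lemma~\ref{lem:evimBsub}(\ref{item:11}) and the unitary identities $U_{\ell+1}^*U_{\ell+1}=\id$, $U_{\ell+1}U_{\ell+1}^*=\id-v_1^\ell(v_1^\ell)^*$. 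This is more elementary and yields exactly what the lemma states; the paper's index-tracking argument gives slightly finer information (for instance, that $\rho_i^{\ell+1}=\rho_i^\ell+1$ precisely when the index that drops equals~$2$), but that refinement is not needed for Theorem~\ref{thm:genericity}.
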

\begin{proof}
  Let $\{t_j : j=1,\ldots,m_i^\ell \}$ be a basis for $\B_i^\ell$ and
  let $\kappa_{i,j}^\ell$, for $j=1,\ldots,m_i^\ell$ be the
  controllability indices of $(A_i^\ell, B_i^\ell)$. 
  By (\ref{eq:33}) and the feedback invariance of controllability
  indices, $\kappa_{i,j}^\ell$ also are the controllability indices of
  the pair $(A_i^{\ell,\CL},B_i^\ell)$. Since $(A_i^\ell,B_i^\ell)$ is
  controllable, then $(A_i^{\ell,\CL},B_i^\ell)$ also is controllable,
  and $D = \{ (A_i^{\ell,\CL})^k t_j : j=1,\ldots,m_i^\ell ;\:
  k=0,\ldots,\kappa_{i,j}^\ell - 1\}$ is a basis for
  $\R^{n_\ell}$. Write $v_1^\ell$ with respect to the basis $D$:
  $v_1^\ell = \sum_{j,k} c_{j,k} (A_i^{\ell,\CL})^k t_j$, where not
  all the $c_{j,k}$ are zero. Combining the latter with $A_i^{\ell,\CL} v_1^\ell
  = \lambda_i^\ell v_1^\ell$ yields
  \begin{equation}
    \label{eq:18}
    \sum_{j,k} c_{j,k} (A_i^{\ell,\CL})^{k+1} t_j 
    = \sum_{j,k} \lambda_i^\ell c_{j,k} (A_i^{\ell,\CL})^{k} t_j.
  \end{equation}
  From (\ref{eq:18}), it follows that $c_{j,k} \neq 0$ for at least
  one pair of indices $(j,k)$ such that $k=\kappa_{i,j}^\ell - 1$, or
  otherwise the vectors in $D$ would be linearly dependent, a
  contradiction. Let $\bar\kappa = \max_j \{\kappa_{i,j}^\ell :
  c_{j,k} \neq 0\text{ with }k=\kappa_{i,j}^\ell - 1\}$, and let
  $\bar\iota$ be such that $c_{\bar\iota,\bar\kappa - 1} \neq 0$. From
  the basis $D$, construct another basis, $\bar D$, by replacing the
  basis vector $(A_i^{\ell,\CL})^{\bar\kappa-1} t_{\bar\iota}$ by
  $v_1^\ell$. Note that $\spn\{U_{\ell+1}^* t : t\in\bar D\} =
  \R^{n_\ell-1}$ and $U_{\ell+1}^* v_1^\ell = 0$ (recall
  Section~\ref{sec:algorithm}). By (\ref{eq:62})--(\ref{eq:100}) and
  the fact that $A_i^{\ell,\CL} v_1^\ell = \lambda_i^\ell v_1^\ell$,
  then $U_{\ell+1}^* A_i^{\ell,\CL} = A_i^{\ell+1}
  U_{\ell+1}^*$. Hence $U_{\ell+1}^* (A_i^{\ell,\CL})^k t_j =
  (A_i^{\ell+1})^k U_{\ell+1}^* t_j$ and
  \begin{multline}
    \label{eq:19}
    \{(A_i^{\ell+1})^k U_{\ell+1}^* t_j : j=1,\ldots,m_i^\ell ;\\
    k=0,\ldots,\kappa_{i,j}^\ell - 1, (j,k)\neq (\bar\iota,\bar\kappa-1)\}
  \end{multline}
  is a basis for $\R^{n_\ell-1}$ [recall that, by (\ref{eq:5}),
  $n_{\ell+1} = n_\ell - 1$]. From (\ref{eq:6}), it follows that
  $\B_i^{\ell+1} = U_{\ell+1}^* \B_i^\ell$. We have that a basis
  for $\B_i^{\ell+1}$, is $E=\{U_{\ell+1}^* t_j : j=1,\ldots,m_i^\ell
  \}$ if $\bar\kappa > 1$ or $E=\{U_{\ell+1}^* t_j :
  j=1,\ldots,m_i^\ell; j\neq\bar\iota \}$ if $\bar\kappa = 1$. The
  condition $\bar\kappa = 1$ hence happens if and only if $v_1^\ell
  \in \Sb_i^\ell$. The preceding derivations show that the
  controllability indices of $(A_i^{\ell+1},B_i^{\ell+1})$ are given
  by $\kappa_{i,j}^{\ell+1} = \kappa_{i,j}^\ell$ for
  $j=1,\ldots,m_i^\ell$ with $j\neq\bar\iota$ and
  $\kappa_{i,\bar\iota}^{\ell+1} = \kappa_{i,\bar\iota}^\ell - 1$
  whenever $\bar\kappa = \kappa_{i,\bar\iota}^\ell > 1$. From the
  latter expressions, and recalling Lemma~\ref{lem:dimcs},
  (\ref{eq:42}) and the controllability of $(A_i^{\ell+1},
  B_i^{\ell+1})$ are established. Note that $\rho_i^{\ell+1} =
  \rho_i^\ell + 1$ whenever $\bar\kappa = 2$, since then
  $\kappa_{i,\bar\iota}^{\ell+1} = \bar\kappa - 1 = 1$ and hence
  $(A_i^{\ell+1},B_i^{\ell+1})$ has one controllability index equal to
  one more than $(A_i^\ell,B_i^\ell)$. The fact that $\Sb_i^{\ell+1}
  \supset U_{\ell+1}^* \Sb_i^\ell$ follows from the latter
  consideration and the basis $E$.
\end{proof}


\begin{proof}[Proof of Theorem~\ref{thm:genericity}]
  First, we prove that, if true, the following conditions
  \begin{equation}
    \label{eq:keystep}
    \begin{split}
      \{\Sb_i^\ell:i\in\Ind\} \text{ transverse},\quad
      q_\ell \ge0,\\
      (A_i^\ell,B_i^\ell) \text{ controllable},
    \end{split}
  \end{equation}
  imply that $p_\ell > 0$. 
  Since $\Sb_i^\ell \subset \B_i^\ell$, then $\rho_i^\ell \le
  m_i^\ell$ and $q_\ell \le p_\ell$. From controllability of
  $(A_i^\ell,B_i^\ell)$ and Lemma~\ref{lem:dimcs}, then $\rho_i^\ell =
  m_i^\ell$ if and only if $m_i^\ell = n_\ell$. Hence, if $q_\ell = p_\ell$,
  then $p_\ell = n_\ell > 0$. Otherwise, $0 \le q_\ell < p_\ell$.

  Next, we establish the validity of (\ref{eq:keystep}) for
  $\ell=1,\ldots,n$. Note that (\ref{eq:keystep}) hold at $\ell=1$ by
  assumption. Next, suppose that (\ref{eq:keystep}) hold at some $1
  \le \ell \le n-1$. By the argument in the previous paragraph, then
  $p_\ell > 0$, which ensures the existence and computation of
  $v_1^\ell \neq 0$ such that $A_i^{\ell,\CL} v_1^\ell =
  \lambda_i^\ell v_1^\ell$ with scalar $\lambda_i^\ell$ for all
  $i\in\Ind$. Hence, $(A_i^{\ell+1}, B_i^{\ell+1})$ is controllable by
  Lemma~\ref{lem:cinditer}. Also by Lemma~\ref{lem:cinditer}, we have
  $\Sb_i^{\ell+1} \supset U_{\ell+1}^* \Sb_i^\ell$ for all
  $i\in\Ind$. Since $q_\ell \ge 0$, by
  Lemma~\ref{lem:transverse}(\ref{item:7}) we have that $\rho^\ell =
  q_\ell$, and from Lemma~\ref{lem:transverse}(\ref{item:8}) we have
  $\dim(\Sb_i^\ell + \Sb_j^\ell) = n_\ell$ for all $i,j \in\Ind$ with
  $i\neq j$. It follows that $\dim(\Sb_i^{\ell+1} + \Sb_j^{\ell+1})
  \ge \dim(U_{\ell+1}^* (\Sb_i^\ell + \Sb_j^\ell)) = n_\ell -1 =
  n_{\ell+1}$ for all $i,j \in\Ind$ with $i\neq j$. 
  The latter fact establishes that the sum of the sets in every subset
  of $\{\Sb_i^{\ell+1}:i\in\Ind\}$ has maximal dimension and also that
  $\{\Sb_i^{\ell+1},\Sb_j^{\ell+1}\}$ is transverse for all
  $i,j\in\Ind$ with $i\neq j$. Let $T$ be a subset of
  $\{\Sb_i^{\ell+1}:i\in\Ind\}$. We proceed by induction on the number
  of subspaces in $T$. We have already established that $T$ is
  transverse if $\#T = 2$. Suppose next that $T$ is transverse
  whenever $\#T = 2,\ldots,\alpha$, with $\alpha \le N-1$. Let $T =
  \{\Sb_i^{\ell+1} : i\in I\}$, with $I\subset\Ind$ and $\#I =
  \alpha$, and consider $R = T \cup \{\Sb_j^{\ell+1}\}$ so that $\#R =
  \alpha+1$. By Lemma~\ref{lem:cinditer} and properties of maps and
  subspaces, we have
  \begin{align}
    \bigcap_{i\in I} \Sb_i^{\ell+1} + \Sb_j^{\ell+1} &\supset
    \bigcap_{i\in I} U_{\ell+1}^* \Sb_i^\ell + U_{\ell+1}^* \Sb_j^\ell\notag\\
    \label{eq:4}
    &\supset U_{\ell+1}^* (\bigcap_{i\in I} \Sb_i^\ell + \Sb_j^\ell).
  \end{align}
  By (\ref{eq:keystep}) and since $I \subset \Ind$, then $\{\Sb_i^\ell
  : i\in I\}$ is transverse. By
  Lemma~\ref{lem:transverse}(\ref{item:6}), then $\dim(\bigcap_{i\in
    I} \Sb_i^\ell + \Sb_j^\ell) = n_\ell$. Combining the latter
  equality with (\ref{eq:4}), then $\dim(\bigcap_{i\in I}
  \Sb_i^{\ell+1} + \Sb_j^{\ell+1}) = n_\ell - 1 = n_{\ell+1}$. By
  Lemma~\ref{lem:transverse}(\ref{item:6}) then $R$ is transverse. We
  have thus established that our induction hypothesis is valid for
  $\alpha+1$ and we conclude that $\{\Sb_i^{\ell+1}:i\in\Ind\}$ is
  transverse. 
  By Lemma~\ref{lem:transverse}(\ref{item:7}), then $\rho^{\ell+1} =
  \max\{0,q_{\ell+1}\}$. From (\ref{eq:2b}) and (\ref{eq:42}), it
  follows that the minimum value for $q_{\ell+1}$ is $q_\ell - 1$, and
  this happens only if $\rho_i^{\ell+1} = \rho_i^\ell - 1$ for all
  $i\in\Ind$. However, if $q_\ell = 0$, then $\rho_i^{\ell+1} \ge
  \rho_i^\ell$ for at least one $i\in\Ind$ because, since $\rho^\ell =
  \dim(\Sb^\ell) = q_\ell = 0$, then $v_1^\ell \notin
  \Sb^\ell$. Consequently $q_{\ell+1} \ge 0$ and hence we have
  established (\ref{eq:keystep}) for $\ell=1,\ldots,n$.
\end{proof}
Theorem~\ref{thm:genericity} gives a condition, namely $\{\Sb_i^1 :
i\in\Ind\}$ transverse and $q_1 \ge 0$, under which the structural
condition of Lemma~\ref{cor:fail-gCEAS}(\ref{item:3}) is satisfied at
every iteration of Algorithm~\ref{alg:main}. The quantity $q_1$
depends on the dimensions of the subspaces $\Sb_i^1$ for $i\in\Ind$,
which, by Lemma~\ref{lem:dimcs} equals the number of controllability
indices equal to 1 of $(A_i^1 = A_i, B_i^1 = B_i)$. A consequence of
Corollary 5.4 of \cite{wonham_book85} is that the latter number is
generically (i.e. for almost every matrices $A_i$ and $B_i$) nonzero
when $m_i > n/2$ and if the latter holds, generically equal to $m_i -
(n \mod m_i)$. If $m_i > n/2$, arbitrary choices for the entries of
$A_i$ and $B_i$ yield arbitrary $\Sb_i^1$, although generically of
dimension $m_i - (n \mod m_i)$. Therefore, if the system dimensions
$n$ and $m_i$ for $i\in\Ind$ are such that $q_1 \ge 0$, then
$\{\Sb_i^1 : i\in\Ind\}$ will be transverse generically in the space
of parameters of the matrices $A_i$ and $B_i$, for $i\in\Ind$. For
example, a DTSS with two subsystems ($N=2$), order $n=6$, subsystem 1
having 4 inputs ($m_1 = 4$) and subsystem 2 having 5 inputs ($m_2 =
5$) will generically satisfy the hypotheses of
Theorem~\ref{thm:genericity}, since $\rho_1^1 = m_1 - (n \mod m_1) =
2$, $\rho_2^1 = 4$ and hence $q_1 = 6 + (2+4) - 2\cdot 6 = 0$. From the
preceding analysis, it follows that the hypotheses of
Theorem~\ref{thm:genericity} can hold only for DTSSs where each
subsystem has ``a lot of'' inputs ($m_i > n/2$, $q_1 \ge 0$).

\section{CONCLUSIONS}
\label{sec:conclusions}

We have addressed feedback stabilisation of discrete-time switched
linear systems with control inputs. The control strategy employed is
to seek feedback matrices so that the closed-loop subsystem matrices
are stable and generate a solvable Lie algebra. The problem of
feedback stabilisation by means of the latter strategy is known to not
always have a solution. In this context, we have derived conditions
under which this problem is ensured to have a solution for most
possible sets of parameters. These conditions hold only for DTSSs
of specific system dimensions, where each subsystem has a considerable
number of inputs, as compared with the system dimension.

\bibliographystyle{plain}

{\small\bibliography{cdc2009}}

\end{document}